\ifCLASSOPTIONcompsoc  \usepackage[caption=false,font=normalsize,labelfont=sf,textfont=sf]{subfig}
\newcommand*{\cbra}[1]{\left\{ #1 \right\}}
\newcommand*{\rbra}[1]{\left( #1 \right)}
\newcommand*{\sbra}[1]{\left[ #1 \right]}
\newcommand*{\set}[2]{\left\{\, #1 \;\middle|\; #2 \,\right\}}
\newcommand*{\abs}[1]{\left\lvert #1 \right\rvert}
\DeclareMathOperator*{\argmin}{arg\,min}
\renewcommand{\epsilon}{\varepsilon}
\renewcommand{\theta}{\vartheta}
\renewcommand{\kappa}{\varkappa}
\renewcommand{\rho}{\varrho}
\renewcommand{\phi}{\varphi}
\newcommand{\prob}[1]{\mathbb{P}\left( #1 \right)}
\theoremstyle{plain}
\newtheorem{theo}{Theorem}
\newtheorem*{defi}{Definition}
\crefname{theo}{theorem}{theorems}
\let\old@ps@headings\ps@headings
\let\old@ps@IEEEtitlepagestyle\ps@IEEEtitlepagestyle
\def\psccfooter#1{%
    \def\ps@headings{%
        \old@ps@headings%
        \def\@oddfoot{\strut\hfill#1\hfill\strut}%
        \def\@evenfoot{\strut\hfill#1\hfill\strut}%
    }%
    \def\ps@IEEEtitlepagestyle{%
        \old@ps@IEEEtitlepagestyle%
        \def\@oddfoot{\strut\hfill#1\hfill\strut}%
        \def\@evenfoot{\strut\hfill#1\hfill\strut}%
    }%
    \ps@headings%
}
        \parbox{\textwidth}{\hrulefill \\ \small{24th Power Systems Computation Conference} \hfill \begin{minipage}{0.2\textwidth}\centering \vspace*{4pt} \includegraphics[scale=0.06]{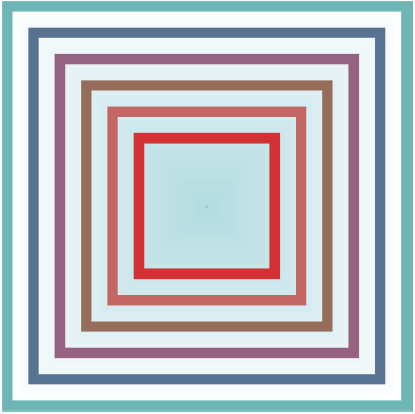}\\\small{PSCC 2026} \end{minipage} \hfill \small{Limassol, Cyprus --- June 8--12, 2026}}
\begin{document}

\title{Extreme Value Distributions of Peak Loads for Non-residential Customer Segments}

\author{\IEEEauthorblockN{Shaohong Shi\IEEEauthorrefmark{1},
 Eric A.~Cator\IEEEauthorrefmark{2}, 
 Jacco Heres\IEEEauthorrefmark{3} and 
 Simon H.~Tindemans\IEEEauthorrefmark{1}
 }
 \IEEEauthorblockA{\IEEEauthorrefmark{1} Dept.~of Electrical Sustainable Energy,
 Delft University of Technology,
 Delft, the Netherlands \\
 }
\IEEEauthorblockA{\IEEEauthorrefmark{2} Dept.~of Mathematics,
Radboud University, Nijmegen, the Netherlands \\
 }
 \IEEEauthorblockA{\IEEEauthorrefmark{3} Research Centre for Digital Technologies, 
 Alliander N.V.,
 Arnhem, the Netherlands \\ 
}
}

\maketitle

\begin{abstract}
Electrical grid congestion is a growing challenge in Europe, driving the need for accurate prediction of load, particularly of peak load. Non-time-resolved models of peak load offer the advantages of simplicity and compactness, and among them, Velander's formula (VF) is a traditional method that has been used for decades. Moreover, VF can be adapted into a quantile VF, which learns a truncated cumulative distribution function of peak load based on electricity consumption. This paper proposes a mathematical model based on extreme value theory to characterize the probability distribution of peak load for large non-residential customers. The model underpins the quantile VF as demonstrated through multiple quantile regression and reduces its representation to just four parameters without sacrificing predictive performance. Moreover, using maximum likelihood estimation and the likelihood ratio test, we validate that the probability distribution of peak load of analysed groups belongs to the heavy-tailed Fréchet class.
\end{abstract}

\begin{IEEEkeywords}
extreme value theory, likelihood ratio test, peak load forecasting, quantile regression, Velander's formula
\end{IEEEkeywords}

\thanksto{\noindent Submitted to the 24th Power Systems Computation Conference (PSCC 2026). This publication is part of the project ROBUST: Trustworthy AI-based Systems for Sustainable Growth with project number KICH3.LTP.20.006, which is partly financed by the Dutch Research Council (NWO).}

\section{Introduction} \label{sec-introduction}

Grid congestion in Europe is escalating. The cost incurred by EU transmission system operators for remedial actions to alleviate physical grid congestion is projected to rise to EUR 34\textendash  103 billion in 2040 \cite{Thomassen2024}. To manage congestion and reduce costs, it is essential to predict electrical loads and their associated metrics.

While time-resolved load models are all-encompassing by capturing full load behavior \cite{Khuntia2016}, non-time-resolved models that focus directly on peak load, a metric whose accuracy improvement brings the greatest economic benefit \cite{Ranaweera1997}, are typically simpler and more compact, making them more practical and easier for DSOs to deploy. Furthermore, they are indispensable in modeling peak loads for customers without smart meter data and for new customers. Examples include Velander's formula (VF) \cite{Velander1935}, Rusck's diversity factor (RDF) \cite{Rusck1956} and the simple form load model \cite{Seppaelae1996}.

Among them, VF is a traditional and well-established method to estimate peak load and aggregated peak load based on a customer's electricity consumption (EC), which is known or well-estimated for billing purposes. It was first used by Scandinavian distribution system operators (DSOs), and was later adopted by DSOs in the Netherlands and other European countries. Owing to its further development and practical application by Axelsson and Strand \cite{Axelsson1975}, the method is also referred to as the Strand-Axelsson formula. Case studies have shown its efficacy in providing reliable estimates of peak loads for individual customers \cite{Persson2018} and aggregations of customers \cite{Velander1935,Velander1952}, which makes it particularly useful in scenarios such as connecting new customers and reconfiguring substations.

In recent decades, it has become increasingly important for DSOs to model the uncertainty in loads \cite{Heres2017}, and there has been a trend towards probabilistic models. Besides probabilistic non-time-resolved models of peak load in \cite{Adams1991,Gibbons2014,Lee2022}, research has also explored the probability distribution of RDF, which also deals with peak load behavior \cite{Nazarko1998,Chatlani2007}.

To combine the advantages of VF and probabilistic modelling, authors of the present paper have proposed a quantile Velander formula (qVF) \cite{Shi2025}. The qVF is capable of learning truncated cumulative distribution functions (CDFs) of peak loads with multiple quantile regression (MQR). The qVF exhibited high efficacy in year-ahead prediction and reasonable efficacy in fitting customers with a wide range of ECs in the evaluation on three segments of large non-residential customers. Nevertheless, three limitations of the qVF remain to be addressed. First, with one parameter per quantile level, the qVF sacrifices the compactness of the original VF. Second, it fails to capture the tails of probability distributions of peak loads due to the inherent limitation in MQR. Finally, it does not explain the observed rightward shift of the CDFs of aggregated peak loads (corresponding to the same aggregated EC) as the aggregation level increases. 

As the qVF is an empirical formula, it is challenging to resolve its limitations without a supporting mathematical model. Accordingly, extreme value theory (EVT) emerges as a natural framework to analyze peak load \cite{Belzer1993,Li2020a,Lee2022,Jack2025}. In \cite{Lee2022}, the peak load is modeled to follow an extreme value distribution (EVD), whose parameters are related to the mean load by assuming the mean peak load to be a linear combination of the mean load and its square root. The assumption appears to be loosely inspired by the VF due to their structural similarity, which suggests a hidden connection between EVD and the VF.

\textbf{Contribution.} This paper proposes a mathematical model based on extreme value theory to characterize the probability distribution of peak load for large non-residential customers. The model provides a theoretical foundation for the quantile Velander formula and significantly reduces its parameter count from 82 to just four, which is demonstrated through multiple quantile regression. Furthermore, we apply maximum likelihood estimation and likelihood ratio tests to analyze the tail behavior of peak load distributions, identifying the Fréchet class as the most appropriate fit.

\section{Extreme value distribution model of peak load}

\textbf{Notation}. We use \(=\) to denote equality and \(:=\) to denote definition. We denote the set \(\cbra{1, 2, \ldots, n}\) by \([n]\) for \(n \in \mathbb{N}_{> 0}\).

\subsection{Problem statement}

Let \(T \in \mathbb{N}_{>0}\) and let \(\set{(P_i(t))_{t \in [T]}}{i \in C}\) be a set of load profiles for customers in class $\mathcal{C}$, where \([T]\) is interpreted as a discrete time range whose intervals last \(\Delta\). Define the peak load \(P^i_\mathrm{max}\) and the EC \(E_i\) of \(i \in C\) by
\begin{equation} \label{eq:PEdefinition}
P^i_\mathrm{max} := \max_{t \in [T]} P_i(t), \quad E_i := \sum_{t \in [T]} P_i(t) \Delta.
\end{equation}

Our objective is to characterize the probability distribution of the peak load \(P^i_\mathrm{max}\) for a customer \(i \in C\), as a function of the customer's energy consumption \(E_i\) (as in the VF and qVF). To this end, we introduce the following three assumptions.

\par{\textbf{A1}. There exists a positive integer \(K \le T\) and integers \(t_1, t_2, \ldots, t_K \in [T]\) such that for \(i \in C\), 
\begin{equation} P^i_\mathrm{max} = \max_{k \in [K]} P_i(t_k).
\end{equation}
}

\par{\textbf{A2}. For \(i \in C\), \(\set{P_i(t_k)}{k \in [K]}\) is a set of independent random variables.
}

\par{\textbf{A3}. There exist positive constants \(\theta_0, \theta_1\) and a probability distribution \(F\) such that for \(i \in C\) and \(k \in [K]\), 
\begin{equation} \label{eq-tilde-load} (P_i(t_k)-\theta_0 E_i)/\theta_1\sqrt{E_i} \sim F.
\end{equation}
}

Theoretically, we can set \(K = T\) in A1 to treat loads at all time points as i.i.d.\ random variables. However, this typically fails for realistic load profiles, as customers tend to have higher loads in peak time and lower loads in off-peak time. Instead, in A1, we assume that the peak load is the maximum of loads at \(K\) fixed time points, e.g.\ peak time during peak seasons, that these loads are independent in A2, and that these loads scale with their corresponding ECs in a manner inspired by the VF in A3.

We point out that none of these assumptions is strictly correct in the real world, but they aid in formulating a parametric model that will be shown to offer very good real-world performance.

\subsection{Model formulation}

We first briefly introduce relevant aspects of EVT, according to \cite{Haan2006}. The EVD \(G_\gamma\) with the extreme value index \(\gamma \in \mathbb{R}\) is defined by the CDF
\begin{equation} \label{eq-definition-EVD} G_\gamma(x) := \exp\rbra{-(1+\gamma x)^{-1/\gamma}}, \quad 1+\gamma x > 0.
\end{equation}
When \(\gamma = 0\), we interpret \((1+\gamma x)^{-1/\gamma}\) as \(\exp(-x)\) and \(G_\gamma\) is the CDF of the Gumbel distribution. When \(\gamma > 0\), \(G_\gamma\) belongs to the Fréchet class of distributions. When \(\gamma < 0\), \(G_\gamma\) belongs to the reverse-Weibull (r-Weibull) class of distributions.

We say that a distribution \(F\) is in the domain of attraction \(\mathcal{D}(G_\gamma)\), if there exists a positive function \(f\) and a real number \(\gamma\) such that for \(x\) with \(1+\gamma x>0\),
\begin{equation} \label{eq-assumption-EVT} \lim_{t \uparrow x^*} \frac{1-F(t+xf(t))}{1-F(t)} = (1+\gamma x)^{-1/\gamma},
\end{equation}
where \(F(x)\) is the CDF of \(F\) and \(x^* := \sup\set{x}{F(x)<1}\). Note that the Cauchy distribution is in \(\mathcal{D}(G_1)\), the normal, exponential and any gamma distribution are in \(\mathcal{D}(G_0)\), and the beta \((\mu,\nu)\) distribution is in \(\mathcal{D}(G_{-\mu^{-1}})\) \cite{Haan2006}. 

We now introduce the core theorem that will enable application of EVT to the peak load prediction problem.

\begin{theo} \label{theorem-EVT-individual} Let \(F\) be a distribution in \(\mathcal{D}(G_\gamma)\) for a \(\gamma \in \mathbb{R}\), and let \(\theta_0\) and  \(\theta_1\) be two positive constants. Let \(\set{E_i}{i \in C}\) be a set of non-negative numbers indexed by a set \(C\). For \(i \in C\), let \((X_i(t))_{t \in \mathbb{N}_{>0}}\) be a sequence of i.i.d.\ random variables with \(\tilde{X}_i(1) \sim F\), where
\begin{equation} \label{eq-theorem-EVT-i-tilde-X} \tilde{X}_i(t) := (X_i(t)-\theta_0 E_i)/\theta_1\sqrt{E_i}, \quad t \in \mathbb{N}_{>0}.
\end{equation}
For \(i \in C\) and \(n \in \mathbb{N}_{>0}\), let \(G^i_n(y)\) be the CDF of \(X^i_n\), where 
\begin{equation} \label{eq-theorem-EVT-i-Xin} X^i_n := \max_{t \in [n]} X_i(t).
\end{equation}
Then there exist a series \((a_n)_{n \in \mathbb{N}_{>0}}\) of positive real constants and a series \((b_n)_{n \in \mathbb{N}_{>0}}\) of real constants such that, for \(i \in C\),
\begin{equation} \label{eq-theorem-EVT-i-limit} \lim_{n \to \infty} \rbra{G^i_n(y) - G^{i,n}_\gamma(y)} = 0, \quad y \in S^{i,n}_\gamma,
\end{equation}
where for \(i \in C\) and \(n \in \mathbb{N}_{>0}\), 
\begin{equation} \label{eq-theorem-EVT-i-Gingamma} G_\gamma^{i,n}(y) := G_\gamma(z^i_n(y)), \quad y \in S^{i,n}_\gamma,
\end{equation}
\begin{equation} \label{eq-theorem-EVT-i-zin} z^i_n(y) :=  \frac{y}{\theta_1 a_n \sqrt{E_i}} - \frac{\theta_0 \sqrt{E_i}}{\theta_1 a_n} - \frac{b_n}{a_n}, \quad y \in S^{i,n}_\gamma,
\end{equation}
\begin{equation} \label{eq-theorem-EVT-i-Singamma} S^{i,n}_\gamma :=  \set{y \in \mathbb{R}}{1 +\gamma z^i_n(y) > 0}.
\end{equation}
Moreover, for \(i \in C\) and \(n \in \mathbb{N}_{>0}\), there is a probability distribution \(G_\gamma^{i,n}\) with CDF \(G_\gamma^{i,n}(y)\).

\end{theo}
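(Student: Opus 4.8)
The plan is to remove the $\theta_0,\theta_1,E_i$ dressing by an increasing affine change of variable, reducing the claim to the classical limit law for the sample maxima of the i.i.d.\ sequence $(\tilde X_i(t))_t$ with common distribution $F$, and then to upgrade that pointwise limit to a uniform one so that it can be evaluated along the $n$-dependent arguments $z^i_n(y)$.

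\textbf{Affine reduction.} Fix $i\in C$; we may assume $E_i>0$, since otherwise the normalisations in \eqref{eq-theorem-EVT-i-tilde-X} and \eqref{eq-theorem-EVT-i-zin} are ill-posed. Because $\theta_1>0$ and $E_i>0$, the map $\tilde x\mapsto\theta_0 E_i+\theta_1\sqrt{E_i}\,\tilde x$ is strictly increasing and hence commutes with $\max_{t\in[n]}$. Writing $M^i_n:=\max_{t\in[n]}\tilde X_i(t)$ --- whose CDF is $F^n$, the $n$th power of the CDF of $F$ (which I also denote $F$), since the $\tilde X_i(t)$ are i.i.d.\ with distribution $F$ --- equation \eqref{eq-theorem-EVT-i-tilde-X} gives $X^i_n=\theta_0 E_i+\theta_1\sqrt{E_i}\,M^i_n$, so that
\[ G^i_n(y)=\prob{M^i_n\le\tfrac{y-\theta_0 E_i}{\theta_1\sqrt{E_i}}}=F^n\!\rbra{\tfrac{y-\theta_0 E_i}{\theta_1\sqrt{E_i}}}. \]
A direct rearrangement of \eqref{eq-theorem-EVT-i-zin} shows that $z^i_n(y)$ is defined precisely so that $a_n z^i_n(y)+b_n=(y-\theta_0 E_i)/(\theta_1\sqrt{E_i})$; therefore, for every $y\in S^{i,n}_\gamma$, $G^i_n(y)=F^n(a_n z^i_n(y)+b_n)$ while, by \eqref{eq-theorem-EVT-i-Gingamma}, $G^{i,n}_\gamma(y)=G_\gamma(z^i_n(y))$.

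\textbf{Classical input, uniformisation, and conclusion.} Since $F\in\mathcal{D}(G_\gamma)$ in the sense of \eqref{eq-assumption-EVT}, the extreme value theorem (e.g.\ \cite{Haan2006}) supplies constants $a_n>0$ and $b_n\in\mathbb{R}$, depending on $F$ only and hence common to all $i\in C$, with $F^n(a_n x+b_n)\to G_\gamma(x)$ as $n\to\infty$ for every $x\in\mathbb{R}$, where $G_\gamma$ is read as its continuous extension to $\mathbb{R}$ (equal to $0$, resp.\ $1$, outside $\{1+\gamma x>0\}$ according to the sign of $\gamma$); we take these for the sequences $(a_n),(b_n)$ of the statement. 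For each $n$ the map $x\mapsto F^n(a_n x+b_n)$ is a CDF and its pointwise limit $G_\gamma$ is a continuous CDF, so P\'olya's theorem upgrades the convergence to a uniform one, i.e.\ $\delta_n:=\sup_{x\in\mathbb{R}}\abs{F^n(a_n x+b_n)-G_\gamma(x)}\to0$. Evaluating at $x=z^i_n(y)$ and inserting the identities of the previous step,
\[ \abs{G^i_n(y)-G^{i,n}_\gamma(y)}=\abs{F^n\!\rbra{a_n z^i_n(y)+b_n}-G_\gamma\!\rbra{z^i_n(y)}}\le\delta_n \]
for every $y\in S^{i,n}_\gamma$, and $\delta_n\to0$ yields \eqref{eq-theorem-EVT-i-limit}. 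For the last assertion, let $Z$ have CDF $G_\gamma$ and set $W:=\theta_0 E_i+\theta_1\sqrt{E_i}(a_n Z+b_n)$: as a strictly increasing affine image of $Z$, the variable $W$ has a well-defined law (again of extreme-value type with index $\gamma$), and $\prob{W\le y}=G_\gamma(z^i_n(y))=G^{i,n}_\gamma(y)$ for $y\in S^{i,n}_\gamma$, so that law is the desired probability distribution $G^{i,n}_\gamma$.

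\textbf{Main obstacle.} No step is genuinely hard, but the one that cannot be skipped is the upgrade from pointwise to uniform convergence of $F^n(a_n x+b_n)$. It is indispensable because the limit is evaluated at the moving argument $z^i_n(y)$ --- which, for fixed $y$, converges to the lower endpoint $-1/\gamma$ of the support of $G_\gamma$ when $\gamma>0$ and diverges when $\gamma<0$ --- so pointwise convergence alone would give no control of $F^n(a_n z^i_n(y)+b_n)$; the continuity of $G_\gamma$ together with P\'olya's theorem is exactly what bridges the gap. A minor point is that the index set $S^{i,n}_\gamma$ in \eqref{eq-theorem-EVT-i-limit} depends on $n$, but the estimate $\abs{G^i_n(y)-G^{i,n}_\gamma(y)}\le\delta_n$ holds at every $y$ for which the left-hand side is defined, which is the substance of the assertion (and for $y$ in the interior of the relevant range one checks, from the standard asymptotics of $a_n,b_n$, that $z^i_n(y)$ lies in $\{1+\gamma z>0\}$ for all large $n$).
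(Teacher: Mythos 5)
Your proof is correct and follows the same overall route as the paper's: both reduce the claim to the classical Fisher--Tippett/extreme-value limit for $F^n(a_nx+b_n)$ via the observation that $X^i_n$ is the increasing affine image $\theta_0E_i+\theta_1\sqrt{E_i}\,M^i_n$ of the maximum of the i.i.d.\ $F$-distributed variables $\tilde X_i(t)$, so that $G^i_n(y)=F^n(a_nz^i_n(y)+b_n)$, and both verify the final assertion by checking that $y\mapsto G_\gamma(z^i_n(y))$ is a bona fide CDF (you do this by exhibiting the random variable $W$ explicitly; the paper checks monotonicity, right-continuity and the limits at $\pm\infty$ directly). The one substantive difference is the step you single out as the main obstacle: the paper's appendix simply ``replaces $x$ with $z^i_n(y)$'' in the pointwise limit \eqref{eq-limit-EVT}, i.e.\ it substitutes an $n$-dependent argument into a limit statement that holds for each fixed $x$, without justification. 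You close that gap by invoking P\'olya's theorem --- pointwise convergence of the CDFs $x\mapsto F^n(a_nx+b_n)$ to the continuous (suitably extended) CDF $G_\gamma$ is automatically uniform on $\mathbb{R}$ --- so that the bound $\abs{G^i_n(y)-G^{i,n}_\gamma(y)}\le\delta_n$ holds regardless of where $z^i_n(y)$ sits. This makes your argument strictly more rigorous than the paper's sketch on precisely the point where the sketch is weakest; the trade-off is only the extra (standard) uniformisation lemma. Your side remark that $E_i=0$ must be excluded for the normalisations to make sense is also a fair observation that the paper glosses over.
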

\begin{proof}[Proof sketch] The theorem is proved by verifying the conditions of the extreme value theorem in Appendix. 
\end{proof}

To practically fit the EVD \(G_\gamma^{i,K}(y)\) to data, we require its quantile function (QF) and probability density function (PDF) in addition to its CDF.\footnote{Although \Cref{theorem-EVT-individual} does not formally guarantee the same convergence for the PDF and QF as for CDF, they are interchangeable for practical engineering purposes.} These are given in \Cref{theorem-EVT-CDF-PDF-QF}.

\begin{theo} \label{theorem-EVT-CDF-PDF-QF} Let \(i \in C\) and \(n \in \mathbb{N}_{>0}\). Denote by \(g^{i,n}_\gamma(y)\) and \(Q^{i,n}_\gamma(\tau)\) the PDF and QF of \(G_\gamma^{i,n}\). If \(\gamma = 0\), then for \(y \in \mathbb{R}\) and \(\tau \in (0,1)\),
\begin{equation} \label{eq-EVD-QF-G} Q_\gamma^{i,n}(\tau) = \theta_0 E_i + \theta_1(-a_K \ln(-\ln \tau)+b_K) \sqrt{E_i},
\end{equation}
\begin{equation} \label{eq-EVT-PDF-G} g_\gamma^{i,n}(y) = \frac{1}{\theta_1 a_K \sqrt{E_i}} \exp\rbra{-z_i(y)-\exp(-z_i(y))}.
\end{equation}
If \(\gamma \neq 0\), then for \(y \in S^{i,n}_\gamma\) and \(\tau \in (0,1)\),
\begin{equation} \label{eq-EVD-QF-FW} Q_\gamma^{i,n}(\tau) = \theta_0 E_i + \theta_1(a_K \gamma^{-1} \rbra{(-\ln \tau)^{-\gamma} - 1} + b_K) \sqrt{E_i},
\end{equation}
\begin{equation} \label{eq-EVD-PDF-FW} g_\gamma^{i,n}(y) = \frac{(1+\gamma z_i(y))^{-1-1/\gamma}}{\theta_1 a_K \sqrt{E_i}} \exp\rbra{-(1+\gamma z_i(y))^{-1/\gamma}}.
\end{equation}
\end{theo}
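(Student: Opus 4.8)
The plan is to obtain both the PDF and the QF directly from the CDF identity \eqref{eq-theorem-EVT-i-Gingamma}, $G_\gamma^{i,n}(y) = G_\gamma(z^i_n(y))$, by viewing $G_\gamma^{i,n}$ as the law of an affine transformation of a standard EVD variable. The key structural fact is that, by \eqref{eq-theorem-EVT-i-zin}, $y \mapsto z^i_n(y)$ is affine with slope $(\theta_1 a_n \sqrt{E_i})^{-1} > 0$ (using $\theta_1 > 0$, $a_n > 0$ and $E_i > 0$); it is therefore a strictly increasing bijection of $S^{i,n}_\gamma$ onto the support $\{x : 1 + \gamma x > 0\}$ of $G_\gamma$, on which $G_\gamma$ in turn is a strictly increasing bijection onto $(0,1)$. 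Inverting the affine relation $z = z^i_n(y)$ gives $y = \theta_0 E_i + \theta_1(a_n z + b_n)\sqrt{E_i}$, which already displays the affine-in-$z$ form of \eqref{eq-EVD-QF-G} and \eqref{eq-EVD-QF-FW}.

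For the PDF, I would first differentiate the standard EVD CDF \eqref{eq-definition-EVD}: for $\gamma \neq 0$ this gives the density $x \mapsto (1+\gamma x)^{-1-1/\gamma}\exp(-(1+\gamma x)^{-1/\gamma})$ on $\{1+\gamma x > 0\}$, and for $\gamma = 0$, via the convention recorded after \eqref{eq-definition-EVD}, the Gumbel density $x \mapsto \exp(-x - \exp(-x))$. The chain rule applied to $G_\gamma^{i,n}(y) = G_\gamma(z^i_n(y))$ then introduces the Jacobian factor $\frac{\mathrm{d}}{\mathrm{d}y} z^i_n(y) = (\theta_1 a_n \sqrt{E_i})^{-1}$; substituting $x = z^i_n(y)$ reproduces \eqref{eq-EVT-PDF-G} and \eqref{eq-EVD-PDF-FW}.

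For the QF, I would solve $\tau = G_\gamma(z^i_n(y))$ in two stages. First invert $G_\gamma$ on its support: for $\gamma = 0$, $\exp(-\exp(-z)) = \tau$ yields $z = -\ln(-\ln\tau)$; for $\gamma \neq 0$, $\exp(-(1+\gamma z)^{-1/\gamma}) = \tau$ yields $1 + \gamma z = (-\ln\tau)^{-\gamma}$, i.e.\ $z = \gamma^{-1}((-\ln\tau)^{-\gamma} - 1)$. Since $\tau \in (0,1)$ and $G_\gamma$ maps its support bijectively onto $(0,1)$, the resulting $z$ lies in $\{1+\gamma x > 0\}$. Then substitute $z$ into the affine inversion $y = \theta_0 E_i + \theta_1(a_n z + b_n)\sqrt{E_i}$ from the first paragraph to get \eqref{eq-EVD-QF-G} and \eqref{eq-EVD-QF-FW}; because $z^i_n$ maps $S^{i,n}_\gamma$ bijectively onto $\{1+\gamma x>0\}$, the output lies in $S^{i,n}_\gamma$, confirming $Q_\gamma^{i,n}$ is the quantile function of $G_\gamma^{i,n}$.

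None of this presents a real obstacle: it is a one-dimensional change-of-variables computation once the affine structure is isolated. The points that need care are (i) checking the slope of $z^i_n$ is positive, so the map is invertible and order-preserving (otherwise the density acquires an absolute value and the QF is reflected); (ii) handling $\gamma = 0$ and $\gamma \neq 0$ separately while respecting the $\gamma \to 0$ convention of \eqref{eq-definition-EVD} --- note that for $\gamma = 0$ one has $S^{i,n}_0 = \mathbb{R}$, consistent with the stated domain; and (iii) matching the domains, i.e.\ checking that $Q_\gamma^{i,n}$ maps $(0,1)$ into $S^{i,n}_\gamma$ and that $g_\gamma^{i,n}$ is supported exactly on $S^{i,n}_\gamma$. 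That $G_\gamma^{i,n}$ is a genuine probability distribution is already asserted by \Cref{theorem-EVT-individual}, so no separate normalization check is required.
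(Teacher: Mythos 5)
Your proposal is correct and follows essentially the same route as the paper's own (sketched) proof: invert the CDF $G_\gamma^{i,n}(y)=G_\gamma(z^i_n(y))$ to get the quantile function and differentiate it via the chain rule through the affine map $z^i_n$ to get the density, treating $\gamma=0$ and $\gamma\neq 0$ separately. The extra care you take about the positive slope of $z^i_n$ and the domain correspondence is a welcome elaboration but not a different argument.
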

\begin{proof}[Proof sketch] The QF can be obtained by taking the inverse function of \(G^{i,n}_\gamma(y)\) and the PDF can be obtained by differentiating \(G^{i,n}_\gamma(y)\) with respect to \(y\).
\end{proof}

To apply \Cref{theorem-EVT-individual,theorem-EVT-CDF-PDF-QF} to the peak load \(P^i_\mathrm{max}\), we introduce the following two additional assumptions.

\par{\textbf{A4}. There is a \(\gamma \in \mathbb{R}\) such that \(F \in \mathcal{D}(G_\gamma)\).
}

\par{\textbf{A5}. The number \(K\) is sufficiently large such that the limit in \eqref{eq-theorem-EVT-i-limit} is effectively attained for accurate peak load estimation.
}

With Assumptions A1\textendash A4, we regard \(P_i(t_1)\), \(P_i(t_2)\), \(\ldots\), \(P_i(t_K)\) as the \(X_i(1)\), \(X_i(2)\), \(\ldots\), \(X_i(K)\) in \Cref{theorem-EVT-individual}. Together with Assumption A5, we approximate the CDF of \(P^i_\mathrm{max}\) for \(i \in C\) with \(G_\gamma^{i,K}(y)\). 

\begin{defi}[EVD peak load model] 
For a customer $i$ in class $\mathcal{C}$, with energy consumption $E_i$, we model the peak load as
\begin{equation}
    \hat{P}^i_\mathrm{max} \sim G_\gamma^{i,K}.
\end{equation}
The CDF, QF and PDF are given by \(G_\gamma^{i,K}(y)\), defined by \eqref{eq-theorem-EVT-i-Gingamma}, \(Q_\gamma^{i,K}(\tau)\), defined by \eqref{eq-EVD-QF-G} and \eqref{eq-EVD-QF-FW}, and \(g_\gamma^{i,K}(y)\), defined by \eqref{eq-EVT-PDF-G} and \eqref{eq-EVD-PDF-FW}, respectively.
\end{defi}

\subsection{Relation to the quantile Velander formula} \label{subs-relation-to-qVF}

Velander stated in \cite{Velander1935} that, if the load profiles are of the same type, there exist parameters \(\alpha\) and \(\beta\) corresponding to that type, such that the predicted peak load \(\hat{P}^i_\mathrm{max}\) is given by \begin{equation} \label{eq-Velander} \hat{P}^i_\mathrm{max} := \alpha E_i + \beta \sqrt{E_i}, \quad i \in C. \end{equation}
In \cite{Shi2025} the VF was extended to the qVF, which stated that, if the load profiles belong to the same class, there exist functions \(\alpha_\tau\) and \(\beta_\tau\) of \(\tau\) for that class, such that the predicted \(\tau\)-quantile \(\hat{P}^{i, \tau}_\mathrm{max}\) of \(P^i_\mathrm{max}\) is given by
\begin{equation} \label{eq-Velander-quantile} \hat{P}^{i, \tau}_\mathrm{max} := \alpha_\tau E_i + \beta_\tau \sqrt{E_i}, \quad i \in C, \tau \in (0,1).
\end{equation}
In \cite{Shi2025}, \(\tau\) was limited to a set \(A\) of discrete quantile levels to enable MQR to learn the parameters, with the following constraint (labeled `C4') for \(\alpha_\tau\) and \(\beta_\tau\): 
\begin{equation} \label{eq-C4} \alpha_{\tau_1} = \alpha_{\tau_2}\equiv \alpha, \quad \beta_{\tau_1} \le \beta_{\tau_2}, \quad \tau_1 \le \tau_2, \;\; \tau_1, \tau_2 \in A.\end{equation}

Both the VF and the qVF are empirical formulas. However, if we equate the \(\hat{P}^{i, \tau}_\mathrm{max}\) in \eqref{eq-Velander-quantile} with the quantile \(Q^{i,K}_\gamma(\tau)\) of \(P^i_\mathrm{max}\) given in \eqref{eq-EVD-QF-G} and \eqref{eq-EVD-QF-FW}, then we obtain the following parametric representation of \(\alpha_\tau\) and \(\beta_\tau\) for \(\tau \in (0, 1)\):
\begin{align} \label{eq-parametric-representation-alpha} \alpha_\tau = & \; \theta_0, \\
\label{eq-parametric-representation-beta} \beta_\tau = & \begin{cases}
    \theta_1(-a_K \ln(-\ln \tau)+b_K), & \text{ if } \gamma = 0, \\
    \theta_1(a_K \gamma^{-1} \rbra{(-\ln \tau)^{-\gamma} - 1} + b_K), & \text{ if } \gamma \neq 0. 
\end{cases}
\end{align}
This shows the potential of the EVD model to support the empirical qVF, reducing the number of parameters in the representation of \(\alpha_\tau\) and \(\beta_\tau\) from \(1+\abs{A}\) in the qVF to only four: \(\theta_0\), \(\theta_1 a_K\), \(\theta_1 b_K\) and \(\gamma\).

\section{Parameter estimation methodology}

We apply both MQR and MLE to evaluate the EVD model. MQR is used under a similar setting as in \cite{Shi2025} to assess whether the EVD model underpins the qVF by effectively representing the parameters of the qVF with \eqref{eq-parametric-representation-alpha} and \eqref{eq-parametric-representation-beta}. In parallel, MLE is applied together with the likelihood ratio test (LRT) to identify the tail behavior of the probability distributions of peak loads.

When performing MQR and MLE, \(C_\mathrm{tr}\) is the subset of indices of training data, and \(w = (w_0, w_1, \ldots, w_{d-1})\) and \(W\) refers to the parameter vector and the parameter space, respectively, where \(d\) is the dimension of \(w\). Note that we close the range of parameters at their finite end(s) when performing MQR and MLE to avoid excluding potential minimizers located at the boundary of the parameter space. For example, we replace \(w_0 > 0\) with \(w_0 \ge 0\).

\subsection{Multiple quantile regression} \label{subs-analyse-MQR}

The loss function in MQR is the summation of pinball losses at different quantile levels plus an optional regularizer \cite{Takeuchi2006}. The loss function generalizes the pinball loss, and was proven to be a proper scoring rule in \cite{Cervera1996}.

Let \(A\) be a finite set of distinct quantile levels, and let \(Q^{i}(\tau; w)\) denote the predicted value of \(P^{i,\tau}_\mathrm{max}\) given parameter vector \(w\), for \(i \in C\) and \(\tau \in A\). The MQR for obtaining the optimal parameter vector \(\hat{w}\) is given by 
\begin{equation} \label{eq-MQR} \hat{w} := \argmin_{w \in W} \frac{1}{|C_\mathrm{tr}|} \frac{1}{|A|} \sum_{i \in C_\mathrm{tr}} \sum_{\tau \in A} \mathrm{PL}^{i}(\tau; w),\end{equation}
where the objective is the \emph{average pinball loss} (APL) and
\begin{equation} \mathrm{PL}^{i}(\tau; w) := \begin{cases} (\tau-1) \delta^i(\tau; w) & \text{if } \delta^i(\tau; w) < 0, \\
\tau \delta^i(\tau; w) & \text{if } \delta^i(\tau; w) \ge 0, \end{cases}\end{equation}
\begin{equation} \delta^i(\tau; w) := P^i_\mathrm{max} - Q^{i}(\tau; w).
\end{equation}
We omit regularization in \eqref{eq-MQR} since the functions \(Q^{i}(\tau; w)\) that we will use are structurally well-defined and exhibit monotonicity by design.

We perform the MQR \eqref{eq-MQR} in the following five formulations separately: C4 (the qVF under the constraint \eqref{eq-C4}), Gumbel, f-Gumbel (fuzzy-Gumbel), Fréchet and r-Weibull. The \(w\) and \(W\) of each formulation are defined in the second and the third columns of \Cref{table-mqr-mle-formulations}, respectively, where \(\gamma_\mathrm{th}\) is a small postive constant. Note that the \(Q^{i}(\tau; w)\) of the C4 is reparameterized from \eqref{eq-Velander-quantile}, and the \(Q^{i}(\tau; w)\) of Gumbel, Fréchet and r-Weibull formulations are reparameterized from the formulas for \(Q^{i,K}_\gamma(\tau)\) given in \Cref{theorem-EVT-CDF-PDF-QF}.

We justify the f-Gumbel formulation. As \(\gamma\) approaches zero, the expression \(\gamma^{-1} \rbra{(-\ln \tau)^{-\gamma} - 1}\) in \eqref{eq-EVD-QF-FW} becomes ill-conditioned due to cancellation effects and division by small values. To mitigate this, we employ the Taylor polynomial of degree \(3\) at \(\gamma = 0\) of the right-hand side (RHS) of \eqref{eq-EVD-QF-FW} for a numerically stable approximation for \(\gamma \in [-\gamma_\mathrm{th}, \gamma_\mathrm{th}]\), i.e.\
\begin{align} \label{eq-MQR-f-Gumbel-h}
Q^{i}(\tau; w) := & \; w_0 E_i + (w_1 (-\ln(-\ln \tau)  + w_3 (\ln(-\ln \tau))^2/2 \notag \\
& - w_3^2 (\ln(-\ln \tau))^3/6 + w_3^3 (\ln(-\ln \tau))^4/24) \notag \\
& + w_2) \sqrt{E_i}.
\end{align}

\begin{table*}[!htbp]
\renewcommand{\arraystretch}{1.3}
\caption{Parameter vector \(w\) and parameter space \(W\) in different formulations of MQR and MLE}
\begin{center}
\begin{tabular}{|c||c|c||c|c|} \hline
Form. & \(w\) in MQR & \(W\) in MQR & \(w\) in MLE & \(W\) in MLE \\ \hline
C4 & \(((\alpha_\tau, \beta_\tau))_{\tau \in A}\) & \(\set{w}{w \text{ satisfies } \eqref{eq-C4}}\) & \multicolumn{2}{c|}{not applicable}  \\ \hline
Gumbel & \((\theta_0, \theta_1 a_K, \theta_1 b_K)\) & \(\mathbb{R}_{\ge 0} \times \mathbb{R}_{\ge 0} \times \mathbb{R}\) & \(\left(\frac{1}{\theta_1 a_K}, \frac{\theta_0}{\theta_1 a_K}, \frac{b_K}{a_K}\right)\) & \(W_0 := \mathbb{R}_{\ge 0} \times \mathbb{R}_{\ge 0} \times \mathbb{R}\)  \\ \hline
f-Gumbel & \((\theta_0, \theta_1 a_K, \theta_1 b_K, \gamma)\) & \(\mathbb{R}_{\ge 0} \times \mathbb{R}_{\ge 0} \times \mathbb{R} \times [-\gamma_\mathrm{th}, \gamma_\mathrm{th}]\) & \multirow{3}{*}{\(\left(\frac{1}{\theta_1 a_K}, \frac{\theta_0}{\theta_1 a_K}, \frac{b_K}{a_K}, \gamma\right)\)} & \(W_0 \times [-\gamma_\mathrm{th}, \gamma_\mathrm{th}]\) \\ \cline{1-3} \cline{5-5}
Fréchet & \multirow{2}{*}{\((\theta_0, \theta_1 a_K \gamma^{-1}, \theta_1 (b_K - a_K \gamma^{-1}), \gamma)\)} & \(\mathbb{R}_{\ge 0} \times \mathbb{R}_{\ge 0} \times \mathbb{R} \times [\gamma_\mathrm{th}, \infty)\) &  & \(W_0 \times [\gamma_\mathrm{th}, \infty) \cap W^*\) \\ \cline{1-1} \cline{3-3} \cline{5-5}
r-Weibull & & \(\mathbb{R}_{\ge 0} \times \mathbb{R}_{\le 0} \times \mathbb{R} \times (-\infty, -\gamma_\mathrm{th}]\) & & \(W_0 \times (-\infty, -\gamma_\mathrm{th}] \cap W^*\) \\ \hline
\end{tabular}
\label{table-mqr-mle-formulations}
\end{center}
\end{table*}

\subsection{Maximum likelihood estimation} \label{subs-analyse-MLE}

With the PDF of peak loads given in \Cref{theorem-EVT-CDF-PDF-QF}, the EVD model can also be fitted with MLE in addition to MQR. As MQR is limited to a finite set of discrete quantile levels, its ability to capture tail behavior is inherently constrained. In contrast, MLE leverages the full probability distribution, enabling more effective modeling of the tails. In this section, we formulate MLE to evaluate the EVD model and the likelihood ratio test (LRT) to identify the tail behavior of the probability distribution of peak load. Additionally, we calculate the approximate standard deviation of the estimated \(\gamma\) via MLE in the Fréchet formulation, which provides insight into the result from the LRT.

The MLE for obtaining the optimal parameter vector \(\hat{w}\) is given by 
\begin{equation} \label{eq-MLE} \hat{w} := \argmin_{w \in W} \frac{1}{|C_\mathrm{tr}|} \sum_{i \in C_\mathrm{tr}} -\ell^i(P^i_\mathrm{max}; w),\end{equation}
where \(\ell^i(P^i_\mathrm{max}; w)\) is the log-likelihood of observation $P^i_\mathrm{max}$ of customer \(i\) for parameter vector \(w\). The objective is the average negative log-likelihood (ANLL) over the training set. 

We perform the MLE \eqref{eq-MLE} in the following four formulations separately: Gumbel, f-Gumbel (fuzzy-Gumbel), Fréchet and r-Weibull. Note that the qVF cannot be fitted with MLE due to its lack of a complete underlying distribution. The \(w\) and \(W\) of each formulation are defined in the fourth and the fifth columns of \Cref{table-mqr-mle-formulations}, respectively, where \(W^*\) originates from \eqref{eq-theorem-EVT-i-Singamma} and is defined by
\begin{equation} \label{eq-def-W*} W^* := \set{w \in \mathbb{R}^4}{1+w_3 z_i > 0 \text{ for } i \in C},
\end{equation}
\begin{equation} \label{eq-EVT-MLE-zi} z_i := w_0 P^i_\mathrm{max}/\sqrt{E_i} - w_1 \sqrt{E_i} - w_2, \quad i \in C.
\end{equation}

We introduce \(\ell^i(P^i_\mathrm{max}; w)\) in each formulation, which is reparameterized from \(\ln g_\gamma^{i, K}\rbra{P^i_\mathrm{max}}\) in general terms. Namely, in the Gumbel formulation, 
\begin{equation} \ell^i(P^i_\mathrm{max}; w) := \ln w_0 - \frac{1}{2} \ln E_i - z_i - \exp(-z_i),
\end{equation}
and in the Fréchet and r-Weibull formulations, 
\begin{align} \label{eq-MLE-NLL-FW-original} \ell^i(P^i_\mathrm{max}; w) := & \ln w_0 - \frac{1}{2} \ln E_i - (1+w_3 z_i)^{-1/w_3}  \notag \\
& - (1+1/w_3)\ln(1+w_3 z_i).
\end{align}

Two issues arise during the MLE. First, when \(\gamma\) is close to \(0\), the last two terms in the RHS of \eqref{eq-MLE-NLL-FW-original} become ill-conditioned. To mitigate this, the f-Gumbel formulation is proposed for \(\gamma \in [-\gamma_\mathrm{th}, \gamma_\mathrm{th}]\), where \(\gamma_\mathrm{th}\) is a small positive constant and the Taylor polynomial of degree \(2\) at \(\gamma = 0\) of the RHS of \eqref{eq-MLE-NLL-FW-original} is used as a numerically stable approximation, i.e.\ 
\begin{align} \label{eq-MLE-NLL-FW-Taylor} \ln w_0 & - \frac{1}{2} \ln E_i - z_i - \exp(-z_i) \notag \\
& - w_3 \left( -\frac{z_i^2}{2} + \frac{z_i^2 \exp(-z_i)}{2} + z_i \right) \notag \\
& - w_3^2 \left( \frac{z_i^4 \exp(-z_i)}{8} + \frac{z_i^3}{3} - \frac{z_i^3 \exp(-z_i)}{3} - \frac{z_i^2}{2} \right).
\end{align}

The other issue is that during objective function evaluations in the optimization, \(1+w_3 z_i\) in \eqref{eq-MLE-NLL-FW-original} can be a very small positive number or even negative. To ensure numerical stability of the evaluations, we replace the logarithm and power operations with their numerically safe counterparts in the optimization. Specifically, we replace the RHS of \eqref{eq-MLE-NLL-FW-original} with
\begin{align} \label{eq-MLE-NLL-FW} \ln w_0 & - \frac{1}{2} \ln E_i  - \max\rbra{1+w_3 z_i, \epsilon_\mathrm{th}}^{-1/w_3} \notag\\
& - (1+1/w_3)\ln\max\rbra{1+w_3 z_i, \epsilon_\mathrm{th}},
\end{align}
where \(\epsilon_\mathrm{th}\) is a small positive constant. Note that this replacement is used only in the optimization with the Fréchet and r-Weibull formulations, not in other scenarios such as calculating the training and testing ANLLs.

Next, we introduce the LRT to compare the Gumbel and the Fréchet formulations \cite{Coles2001}. From the testing ANLLs shown later in \Cref{table-anll-evt}, it is not obvious whether the simpler Gumbel formulation suffices, or the Fréchet formulation provides a significantly better fit with an additional parameter \(\gamma \neq 0\). To investigate this, we define the hypothesis test with
\begin{itemize}
  \item the null hypothesis \(H_0\): \(\gamma = 0\), and
  \item the alternative hypothesis \(H_1\): \(\gamma \ge \gamma_\mathrm{th}\).
\end{itemize}
Let \(-\ell_0\) and \(-\ell_1\) be the minimums of \(\sum_{i \in C} -\ell^i(P^i_\mathrm{max}; w)\) under \(H_0\) (the Gumbel formulation)  and \(H_1\) (the Fréchet formulation), respectively. Then the test statistic is \(\Lambda = -2(\ell_0 -\ell_1)\) and the \(p\)-value is computed as \(1- \chi^2_1(\Lambda)\), where \(\chi^2_1\) is the CDF of the \(\chi^2\)-distribution with degrees of freedom equal to \(1\). A small \(p\)-value indicates that the EVD model with \(\gamma > \gamma_\mathrm{th}\) provides a significantly better fit than that with \(\gamma = 0\), justifying the inclusion of the extreme value index \(\gamma\).

Finally, we approximate the standard deviation \(\mathrm{Std}\rbra{\hat{\gamma}}\) of the estimated \(\gamma\) in the Fréchet formulation, by taking the square root of the relevant entry of the inverse of the observed Fisher information at the estimated \(\gamma\) \cite{Casella2024}. The Fisher information is the Hessian matrix of \(-\ell_1(w) = \sum_{i \in C} -\ell^i(P^i_\mathrm{max}; w) \) evaluated at the estimate \(\hat{w} := \argmin_{w \in W} -\ell_1(w)\), so that $\mathrm{Std}\rbra{\hat{\gamma}} \approx \sqrt{\sbra{\rbra{H(-\ell_1(\hat{w}))}^{-1}}_{4,4}}$.


\section{Data, results and analysis} 
Smart meter data of large electricity customers (with a grid capacity between 60\SI{}{\kilo\watt} and 100\SI{}{\mega\watt}) for the years 2022, 2023 and 2024 were collected from Dutch DSO Liander with a 15-minute resolution. The units of loads and ECs are \SI{}{\kilo\watt} and kW\(\cdot\)15~minute, respectively, where \(\Delta\) is set to 15 minutes so that its value equals \(1\) in \eqref{eq:PEdefinition}. We performed analyses on three categories of large customers: SBI code 8411 (general government administration), SBI code 6420 (financial holding companies) and KvK code 004 (industry). \Cref{tabl-number-customers} shows the number of customers in the original data and the number of customers after removing customers whose load profiles are incomplete, have negative values, and are all zero at the first 672 time points (during the first week).

\begin{table}[!ht]
\caption{Number of customers (original \(\rightarrow\) processed)}
\begin{center}
\begin{tabular}{|c|c|c|c|}
\hline
\textbf{Segment\textbackslash\,Year} & 2022 & 2023 & 2024 \\ \hline
SBI code 8411 & 1058 \(\rightarrow\) 778 & 1197\(\rightarrow\) 794 & 1202 \(\rightarrow\) 873 \\ \hline
SBI code 6420 & 1065 \(\rightarrow\) 709 & 1143 \(\rightarrow\) 727 & 1144 \(\rightarrow\) 765 \\ \hline
KvK code 004 & 1261 \(\rightarrow\) 952 & 1324 \(\rightarrow\) 934 & 1325 \(\rightarrow\) 961 \\ \hline
\end{tabular}
\label{tabl-number-customers}
\end{center}
\end{table}

In model fitting, we set \(\gamma_\mathrm{th} = \SI{1e-2}{}\) and \(\epsilon_\mathrm{th} = \SI{1e-20}{}\) and adopt the same set \(A\) of quantile levels as in \cite{Shi2025}, i.e.\ \(\cbra{0.10,0.11, \ldots, 0.90}\). To derive a more accurate estimate of model performance, we conducted 5-fold cross-validation for all formulations, and report their average training and testing APLs from MQR in \Cref{table-apl-EVT} and ANLLs from MLE in \Cref{table-anll-evt}. Note that lower values of APL and ANLL indicate a better model fit. In the two tables we also report the average estimated values \(\hat{\gamma}\) of \(\gamma\) across all folds for the f-Gumbel, Fréchet and r-Weibull formulations. The last column of \Cref{table-anll-evt} presents the standard deviation of \(\hat{\gamma}\) in the Fréchet formulation. To intuitively show the model, we plot the quantile functions obtained via MLE in the Fréchet formulation for the customers with SBI code 8411 in 2022 in \Cref{fig-qr-curves-2022-SBI-8411}. Both peak load and EC are plotted on a logarithmic scale to improve visibility.

\begin{table*}[!htbp]
\caption{Multiple Quantile Regression: Average training (Tr) APLs (kW), testing (Te) APLs (kW) and \(\hat{\gamma}\)}
\begin{center}
\begin{tabular}{|c||c|c||c|c||c|c|c||c|c|c||c|c|c|} \hline
\multirow{2}{*}{Year} & \multicolumn{2}{c||}{C4} & \multicolumn{2}{c||}{Gumbel} & \multicolumn{3}{c||}{f-Gumbel} & \multicolumn{3}{c||}{Fréchet} & \multicolumn{3}{c|}{r-Weibull} \\ 
 & Tr & Te & Tr & Te & Tr & Te & \(\hat{\gamma}\) & Tr & Te & \(\hat{\gamma}\) & Tr & Te & \(\hat{\gamma}\) \\ \hline
\multicolumn{14}{|c|}{SBI code 8411} \\ \hline
2022 & 30.81 & 31.17 & 30.93 & 31.22 & 30.92 & 31.22 & 0.01000 & 30.81 & 31.16 & 0.45978 & 30.93 & 31.22 & -0.01044 \\ \hline
2023 & 28.83 & 30.13 & 28.92 & 30.12 & 28.91 & 30.11 & 0.01000 & 28.84 & 30.09 & 0.39111 & 28.92 & 30.12 & -0.01000 \\ \hline
2024 & 27.93 & 28.09 & 28.03 & 28.17 & 28.03 & 28.16 & 0.01000 & 27.94 & 28.08 & 0.44643 & 28.04 & 28.17 & -0.01000 \\ \hline
\multicolumn{14}{|c|}{SBI code 6420} \\ \hline
2022 & 24.30 & 29.22 & 24.33 & 29.25 & 24.33 & 29.25 & 0.01000 & 24.31 & 29.22 & 0.18692 & 24.33 & 29.26 & -0.01016 \\ \hline
2023 & 24.66 & 27.13 & 24.68 & 27.14 & 24.68 & 27.14 & 0.00930 & 24.67 & \textbf{27.16} & 0.11871 & 25.47 & 27.95 & -0.01528 \\ \hline
2024 & 29.27 & 30.82 & 29.29 & 30.88 & 29.29 & 30.88 & 0.00600 & 29.28 & 30.87 & 0.11820 & 29.29 & 30.88 & -0.01617 \\ \hline
\multicolumn{14}{|c|}{KvK code 004} \\ \hline
2022 & 42.93 & 44.25 & 42.95 & 44.19 & 42.95 & 44.19 & 0.01000 & 42.94 & \textbf{44.25} & 0.08645 & 42.95 & 44.20 & -0.01133 \\ \hline
2023 & 43.00 & 43.52 & 43.04 & 43.51 & 43.03 & 43.50 & 0.01000 & 43.01 & 43.49 & 0.14750 & 43.04 & 43.51 & -0.01001 \\ \hline
2024 & 41.33 & 41.57 & 41.36 & 41.54 & 41.35 & 41.54 & 0.01000 & 41.34 & 41.53 & 0.14799 & 41.36 & 41.54 & -0.01044 \\ \hline
\end{tabular}
\label{table-apl-EVT}
\end{center}
\end{table*}

\begin{table*}[!htbp]
\caption{Maximum Likelihood Estimation: Average training (Tr) ANLLs, testing (Te) ANLLs, \(\hat{\gamma}\) and \(\mathrm{Std}\rbra{\hat{\gamma}}\)}
\begin{center}
\begin{tabular}{|c||c|c||c|c|c||c|c|c|c||c|c|c||} \hline
\multirow{2}{*}{Year} & \multicolumn{2}{c||}{Gumbel} & \multicolumn{3}{c||}{f-Gumbel} & \multicolumn{4}{c||}{Fréchet} & \multicolumn{3}{c||}{r-Weibull} \\ 
  & Tr & Te & Tr & Te & \(\hat{\gamma}\) & Tr & Te & \(\hat{\gamma}\) & \(\mathrm{Std}\rbra{\hat{\gamma}}\) & Tr & Te & \(\hat{\gamma}\) \\ \hline
\multicolumn{13}{|c|}{SBI code 8411} \\ \hline
2022 & 5.4390 & 5.4435 & 5.4261 & 5.4304 & 0.01000 & 5.2654 & 5.2693 & 0.44910 & 0.03774 & 5.4536 & 5.4583 & -0.01000 \\ \hline
2023 & 5.3505 & 5.3672 & 5.3379 & 5.3539 & 0.01000 & 5.1863 & 5.1960 & 0.40683 & 0.03387 & 5.3645 & 5.3821 & -0.01000 \\ \hline
2024 & 5.2756 & 5.2861 & 5.2574 & 5.2660 & 0.01000 & 5.1151 & 5.1224 & 0.34578 & 0.03038 & 5.3008 & 5.3171 & -0.01000 \\ \hline
\multicolumn{13}{|c|}{SBI code 6420} \\ \hline
2022 & 5.3140 & 5.3752 & 5.3019 & 5.3682 & 0.01000 & 5.1815 & \textbf{5.3798} & 0.33371 & 0.03398 & 5.3278 & 5.3837 & -0.01000 \\ \hline
2023 & 5.3392 & 5.3551 & 5.3192 & 5.3335 & 0.01000 & 5.1716 & 5.1826 & 0.34270 & 0.03363 & 5.3687 & 5.3916 & -0.01000 \\ \hline
2024 & 5.2720 & 5.2819 & 5.2575 & 5.2660 & 0.01000 & 5.1351 & 5.1401 & 0.32205 & 0.03271 & 5.2896 & 5.3018 & -0.01000 \\ \hline
\multicolumn{13}{|c|}{KvK code 004} \\ \hline
2022 & 5.6688 & 5.6712 & 5.6634 & 5.6657 & 0.01000 & 5.6315 & 5.6352 & 0.17021 & 0.02478 & 5.6750 & 5.6775 & -0.01000 \\ \hline
2023 & 5.6463 & 5.6521 & 5.6422 & 5.6477 & 0.01000 & 5.6207 & 5.6307 & 0.14334 & 0.02459 & 5.6509 & 5.6572 & -0.01000 \\ \hline
2024 & 5.5770 & 5.5820 & 5.5729 & 5.5777 & 0.01000 & 5.5536 & 5.5589 & 0.12916 & 0.02289 & 5.5820 & 5.5870 & -0.01000 \\ \hline
\end{tabular}
\label{table-anll-evt}
\end{center}
\end{table*}

\begin{figure}[!ht]
\centering
\includegraphics[width=0.9\linewidth]{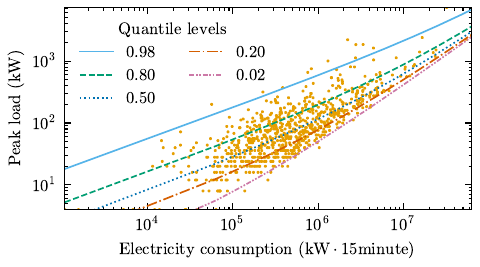}
\caption{SBI code 8411 in 2022: peak loads and electricity consumption of customers alongside curves of quantile functions at different quantile levels, for the MLE-fitted Fr\'echet model.}
\label{fig-qr-curves-2022-SBI-8411}
\end{figure}

We next verify the parametric representation in \eqref{eq-parametric-representation-beta} and compare the Gumbel and Fréchet formulations. The r-Weibull formulation is excluded from further analysis, as its estimated \(\gamma\) is either close to or equal to the upper bound \(\SI{-1e-2}{}\) of the range of \(\gamma\), and according to the APLs and ANLLs, it fails to provide a better fit with the additional parameter \(\gamma\) compared to the Gumbel formulation. 

\subsection{Parametric representation}

As shown in \Cref{table-apl-EVT} and \Cref{table-anll-evt}, the C4, Gumbel, and Fréchet formulations yield similar testing APLs and ANLLs. Note that their number of parameters are \(82\), \(3\) and \(4\), respectively. This strongly indicates that \eqref{eq-parametric-representation-alpha} and \eqref{eq-parametric-representation-beta} provide a valid parametric representation of the parameters in the qVF under the constraint \eqref{eq-C4}.

To illustrate this intuitively, we plot the set of discrete points \(\set{(\tau,\beta_\tau)}{\tau \in A}\) with parameters fitted from the C4 formulation, alongside continuous curves \(\set{(\tau,\beta_\tau)}{\tau \in (0.01, 0.99)}\), where \(\beta_\tau\) is calculated from \eqref{eq-parametric-representation-beta} with parameters fitted from the Gumbel and Fréchet formulation via either MQR or MQR, from the  formulation via MQR, from the Gumbel formulation via MLE, from the Fréchet formulation via MLE, respectively. Note that the parameters here are fitted to the entire set \(C\). In \Cref{fig-betas-comparison} we show the results for the customers with SBI code 8411 in 2022, with SBI code 6420 in 2023 and with KvK code 004 in 2024, respectively.

\begin{figure*}[!htbp]
\centering
\subfloat[SBI 8411 in 2022]{
    \includegraphics[width=0.31\linewidth]{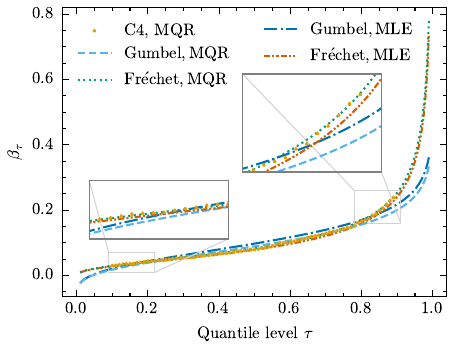}
    \label{fig-mqr-betas-SBI-8411-2022}
}
\hfill
\subfloat[SBI 6420 in 2023]{
    \includegraphics[width=0.31\linewidth]{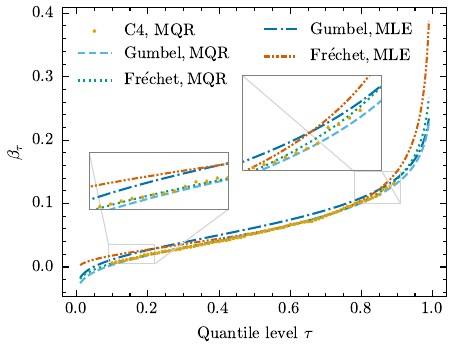}
    \label{fig-mqr-betas-SBI-6420-2023}
}
\hfill
\subfloat[KvK 004 in 2024]{
    \includegraphics[width=0.31\linewidth]{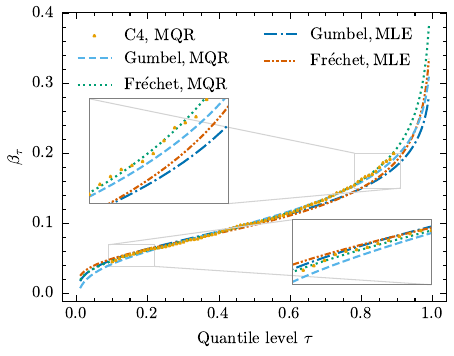}
    \label{fig-mqr-betas-KvK-004-2024}
}
\caption{Comparison of $\beta_\tau$ from the qVF under C4, from the Gumbel formulation and from the Fréchet formulation.}
\label{fig-betas-comparison}
\end{figure*}

As shown in the figures, the parametric representation \eqref{eq-parametric-representation-beta} that uses parameters estimated from the Fréchet formulation via MQR, closely reconstructs the \(\beta_\tau\) values in the qVF under the constraint C4 across the entire set \(A\) of \(\tau\). By comparison, the same representation with parameters from the Gumbel formulation via MQR exhibits deviations for \(\tau\) close to \(0\) or \(1\), which is more evident in \Cref{fig-mqr-betas-SBI-8411-2022} than in \Cref{fig-mqr-betas-SBI-6420-2023,fig-mqr-betas-KvK-004-2024}. Additionally, there are minor differences between the results via MQR and MLE, which is expected given that MLE incorporates the full distribution and is more sensitive to outliers.

\subsection{Gumbel or Fréchet}

As shown in \Cref{table-apl-EVT} and \Cref{table-anll-evt}, the Fréchet formulation generally yields slightly lower testing APLs and ANLLs than the Gumbel formulation, except for the bolded entries. However, this advantage is marginal, especially considering the additional parameter \(\gamma\) and its near-zero values in the Fréchet formulation.

In contrast, the computed \(p\)-values from the LRTs were less than \(\SI{1e-10}{}\) for all the three customer categories across the three years, and they indicate that the Fréchet formulation performs significantly better than the Gumbel formulation from a statistical perspective. This finding does not contradict the near-zero estimates of \(\gamma\), as the even smaller standard deviation \(\mathrm{Std}\rbra{\hat{\gamma}}\) suggests that \(\gamma\) is consistently estimated to be close to, but not exactly, zero. In addition, this statistical advantage is supported by the MQR results in \Cref{fig-mqr-betas-SBI-8411-2022,fig-mqr-betas-SBI-6420-2023,fig-mqr-betas-KvK-004-2024}, which suggest that the Fréchet formulation reconstructs the \(\beta_\tau\) values in the qVF slightly better than the Gumbel formulation. DSOs may additionally prefer the Fréchet formulation because its heavier tail better captures the possibility of outliers, thus resulting in more conservative operation. 

\section{Conclusion}

We proposed an EVT-based model of peak electricity load of customers. Although the model was derived on the basis of idealised assumptions, evaluation on real-world data of three categories of non-residential customers indicates that the model can effectively supplant previous formulations of the quantile Velander's formula, providing a compact and theoretically sound approach to peak load forecasting. Additionally, maximum likelihood estimation, in combination with the likelihood ratio test, identified the Fréchet class of heavy-tailed distributions as the best fit for the distribution of peak load behavior, which provides valuable insights for DSOs. 

Ongoing research focuses extending the model and fitting methods to capture correlation among customers' aggregated peak loads more effectively. This will ideally enable a single high-accuracy model to be used for single customers within a class as well as aggregations of customers. Such models have immediate applications in network reinforcement deferral and connection planning for new customers.

\section*{Acknowledgment}
 
The authors thank Han La Poutr\'e for helpful comments. Microsoft Copilot was used to assist with minor language adjustments under author supervision.


\section*{Appendix: Proof of \Cref{theorem-EVT-individual}} \label{sec-appendix}

Let \(\tilde{X}_n\) denote the maximum of \(n\) i.i.d.\ random variables with distribution \(F\) for \(n \in \mathbb{N}\). Together with \(F \in \mathcal{D}(G_\gamma)\), it follows from Theorem 1.1.6 of \cite{Haan2006} that, there exist a series \((a_n)_{n \in \mathbb{N}_{>0}}\) of positive real numbers and a series \((b_n)_{n \in \mathbb{N}_{>0}}\) of real numbers such that, for \(x\) with \(1+\gamma x > 0\),
\begin{equation} \label{eq-limit-EVT-star} \lim_{n \to \infty} \prob{\tilde{X}_n \le a_n x + b_n} = G_\gamma(x).
\end{equation}

Using \eqref{eq-theorem-EVT-i-Xin} and \eqref{eq-theorem-EVT-i-tilde-X}, we define, for \(i \in C\) and \(n \in \mathbb{N}_{>0}\), the scaled maximum \(\tilde{X}^i_n :=  (X^i_n-\theta_0 E_i)/\theta_1\sqrt{E_i}\), with \(\tilde{X}^i_n = \max_{t \in [n]} \tilde{X}_i(t)\). 
Then, by the conditions imposed on \(\tilde{X}_i(t)\), for \(i \in C\), \(\tilde{X}^i_n\) is identically distributed as \(\tilde{X}_n\). Thus, \eqref{eq-limit-EVT-star} holds for \(i \in C\). Namely, with the given series \((a_n)_{n \in \mathbb{N}_{>0}}\) and \((b_n)_{n \in \mathbb{N}_{>0}}\) of constants, for \(i \in C\) and \(x\) with \(1+\gamma x > 0\),
\begin{equation} \label{eq-limit-EVT} \lim_{n \to \infty} \rbra{\prob{\tilde{X}^i_n \le a_n x + b_n} - G_\gamma(x)} = 0.
\end{equation}
Then we obtain \eqref{eq-theorem-EVT-i-limit} by replacing \(x\) with \(z^i_n(y)\) and replacing \(\tilde{X}^i_n\) with \((X^i_n-\theta_0 E_i)/\theta_1\sqrt{E_i}\) in \eqref{eq-limit-EVT}. Note that \(y \in S^{i,n}_\gamma\) implies that \(1 +\gamma z^i_n(y) > 0\). 

Finally, we prove the existence of the distribution \(G_\gamma^{i,n}\) for \(i \in C\) and \(n \in \mathbb{N}_{>0}\). Note that \(G_\gamma(x)\) is the CDF of \(G_\gamma\), which is a right-continuous monotone function with \(\lim_{x \to -\infty} G_\gamma(x) = 0\) and \(\lim_{x \to \infty} G_\gamma(x) = 1\). These conditions still hold after replacing \(x\) with \(z^i_n(y)\) in \(G_\gamma(x)\), since \(z^i_n(y)\) is a linear function of \(y\) with the positive coefficient \(1/\theta_1 a_n \sqrt{E_i}\). Thus, there exists a distribution \(G_\gamma^{i,n}\) whose CDF is \(G_\gamma^{i,n}(y)\).


\bibliographystyle{IEEEtran}
\bibliography{paper-2-VF-EVT.bib}{}

\end{document}